\newcommand{\fig}{Fig.\,}
\newcommand{\tab}{Tab.\,}
\newcommand{\sect}{Sec.\,}
\newcommand{\cf}{cf.\,}
\newcommand{\ie}{i.e.\,}
\newcommand{\eg}{e.g.\,}
\newcommand{\un}[1]{\underline{#1}}
\newcommand{\R}[1]{\mathbb{R}^{#1}}
\newcommand{\set}[1]{\mathbb{#1}}
\newcommand{\e}{\begin{equation}}
\newcommand{\ee}{\end{equation}}
\newcommand{\abs}[1]{\lvert#1\rvert}  
\newcommand{\trans}{^{\top}} 
\newcommand{\opt}{^{\ast}} 
\newcommand{\ind}[1]{^{(#1)}}
\newcommand{\ts}[1]{\text{#1}}
\newtheorem{prop}{Proposition}
\newcommand{\rp}{^{\ts{RP}}}
\newcommand{\ctrl}{^{\ts{C}}}
\newcommand{\da}{^{\ts{DA}}}
\newcommand{\id}{^{\ts{ID}}}
\newcommand{\reg}{^{\ts{REG}}}
\newcommand{\rsv}{^{\ts{RES}}}
\newcommand{\sys}{^{\ts{S}}}
\newcommand{\avgw}{^{\ts{S}}}
\newcommand{\base}{^{\ts{base}}}
\newcommand{\rf}{^{\ts{ref}}}
\newcommand{\regup}{^{\ts{up}}}
\newcommand{\regdn}{^{\ts{dn}}}
\newcommand{\intTsys}{\set{T}\sys(s)}
\newcommand{\intTid}{\set{T}\id(k)}
\newcommand{\intTc}{\set{T}\ctrl(l)}
\newcommand{\hor}{T^{\ts{H}}}
\renewcommand{\sect}{Section }
\begin{document}
%
\title{Ramp-Rate-Constrained Bidding of Energy and Frequency Reserves in Real Market Settings}
%
%
%

\author{Fabian~L.~M\"uller,
		Stefan Woerner,
        and~John~Lygeros,~\IEEEmembership{Fellow,~IEEE}
%
\thanks{F. L. M\"uller is with the Automatic Control Laboratory, Swiss Federal Institute of Technology, Zurich, Switzerland, and IBM Research--Zurich, Zurich, Switzerland.
        {\tt\small fmu@zurich.ibm.com}}%
\thanks{S. Woerner is with IBM Research--Zurich, Zurich, Switzerland.
        {\tt\small wor@zurich.ibm.com}}%
\thanks{J. Lygeros is with the Automatic Control Laboratory, Swiss Federal Institute of Technology, Zurich, Switzerland.
        {\tt\small jlygeros@ethz.ch}}%
}

\maketitle

\begin{abstract}
The energetic flexibility of electric energy resources can be exploited when trading on wholesale energy and ancillary service markets. This paper considers the problem of a Balance Responsible Party to maximize its profit from trading on energy markets while simultaneously offering Secondary Frequency Reserves to the System Operator. 
However, the accurate provision of regulation power can be compromised by power ramp-rate limitations of the resources providing the service. To avoid this shortcoming, we take power ramp-rate constraints into account explicitly when computing optimal adjustable energy trading policies that are robust against uncertain activation of reserves.
The approach proposed is applicable in real market settings because it models all the different timescales of the day-ahead, intra-day, and reserve markets. Finally, the effect of different market settings and energy trading policies on the amount of available Secondary Frequency Reserves is investigated. 

\end{abstract}

\begin{IEEEkeywords}
Flexibility, robust control, frequency regulation
\end{IEEEkeywords}

\section{Introduction}
\label{s:introduction}

\IEEEPARstart{B}{alance} Responsible Parties (BRPs) act as intermediaries between electric energy resources, electricity markets, and Independent System Operators (ISOs). BRPs are legal entities representing a single or a group of electric energy systems. They are responsible for trading electric energy on energy markets to satisfy the needs of their balance group and to adjust energy schedules to match actual electricity production and/or consumption as accurately as possible. 
If the balance group of a BRP comprises systems whose electric energy generation and/or consumption is flexible, the BRP can offer this flexibility to the ISO in the form of ancillary services that are traded on dedicated flexibility markets \cite{ENTSO-E2009}. 
We consider a BRP of a single flexible system that wishes to maximize the economic profit made from trading energy on day-ahead and intra-day energy markets and offering Secondary Frequency Regulation (SFR) services to the ISO. We refer to the problem of making optimal decisions on energy and reserve markets as the \emph{energy and reserve bidding problem}.

Characterizing and exploiting the energetic flexibility of various types of electric energy resources has recently attracted growing interest because the need for flexibility in power grids is expected to increase as less predictable renewable energy resources, such as wind and solar, are connected to the power grid. In particular, the provision of SFR has been studied for heating, ventilation, and air-conditioning systems \cite{Vrettos2014a,Qureshi2016,Hao2015a}, plug-in electric vehicles \cite{Vagropoulos2013,Liu2014}, and systems behaving like a generalized battery \cite{Hao2014b,Sanandaji2013}, among others. 

The energy and reserve bidding problem faced by the BRP is to decide on the amount of SFR to offer on the reserve market and to optimally trade on energy markets. These decisions are mutually dependent and constitute a trade-off problem between the revenues from offering reserves and the costs of procuring energy. The problem has recently been formulated as a stochastic optimization problem \cite{Zhang2015a,Lymperopoulos2015} and as a robust optimization problem \cite{Vrettos2014a,Vrettos2016b,Gorecki2017,Warrington2012,Warrington2013}. The robust formulations rely on the concept of affinely adjustable robust control (AARC) policies, which have first been applied to the regulation power context in \cite{Warrington2012,Warrington2013,Bertsimas2013,Zhang2014a}, building on the initial work in \cite{BenTal2004} and the references therein. 

The provision of SFR services requires the tracking of an activation signal broadcast by the ISO on the timescale of a few seconds. The ability of a system to accurately follow such signals is limited by the physical properties of the system, in particular by power, power-ramp rate, and energy constraints. While power and energy constraints are considered by most AARC approaches, all works mentioned above neglect power ramp-rate constraints. However, the crucial importance of ramping constraints has been recognized in the context of grid balancing \cite{Borsche2014} and unit commitment \cite{Morales-Espana2014,Parvania2016}, and has lead to the adoption of new performance-based schemes for the remuneration of SFR provision by several ISOs \cite{FERC2011}. But the methods developed in \cite{Borsche2014,Morales-Espana2014,Parvania2016} cannot be directly applied to the energy and reserve bidding problem considered here.


In our formulation of the bidding problem, we adopt the AARC formulation used in \cite{Vrettos2014a,Vrettos2016b,Gorecki2017,Warrington2012,Warrington2013}, and make two main contributions. 
First, in contrast to the aforementioned works, we explicitly incorporate constraints on power-ramp rates in our modeling framework to guarantee the feasibility and accurate provision of SFR. Instead of the commonly used piece-wise constant power schedules, we base our formulation on piece-wise affine and continuous power trajectories, which makes it possible to accurately model power-ramp rates on arbitrary timescales. We then derive a reformulation of the AARC optimization problem which is necessary to guarantee the feasibility of the piece-wise affine solutions.
Second, in our problem formulation we include all the different timescales and market lead times involved in the bidding problem. This allows us to study how different energy and reserve market parameters affect the amount of SFR capacity a system can provide. 
These two contributions make our approach versatile and adaptable, and pave the way for applying the energy and reserve bidding problem in real-world settings, where accurate modeling of system constraints and market specifications is indispensable. 

%

The energy and reserve bidding problem of a BRP is introduced in \sect\ref{s:problemDescription}. In \sect\ref{s:reqsPowerRef}, we illustrate how piece-wise affine power schedules can account for finite ramp rates. Section \ref{s:flexDescription} provides a robust reformulation of the the energetic flexibility of an energy resource. It is used in \sect\ref{s:scheduling} to formulate and solve the energy and reserve bidding problem. A conclusion is provided in \sect\ref{s:conclusion}.

\section{Problem Description}
\label{s:problemDescription}

We consider a BRP that participates in energy and reserve markets and whose balance group comprises a single system. The BRP strives to schedule the electric energy generation and/or consumption of the system so as to maximize the profit made from trading energy and from offering SFR services over a planning horizon of duration $\hor$. The bidding problem involves different timescales and requires the BRP to take decisions on multiple stages both offline, \ie, prior to the time of delivery, and online, \ie, at delivery.

\subsection{Offline decisions}
\textit{1) Reserve capacity market:} An ISO is responsible for the safe and reliable operation of the transmission grid. In particular, it is accountable for keeping the supply and demand of electric energy balanced at all times. To be able to compensate fluctuating generation and consumption and other unexpected disturbances in a timely manner, it procures ancillary services on dedicated markets ahead of time. Different ancillary services exist, and their specifications vary among ISOs \cite{Swissgrid2017,ENTSO-E2009,FERC_MarketPrimer}. We focus on Secondary Frequency Regulation\footnote{Secondary Frequency Regulation is also referred to as Frequency Restoration Reserves, Spinning Reserves, or Load-Frequency Control.} and adopt the product specifications that apply to the Swiss SFR market, \cf\cite{Swissgrid2017}: every week on Tuesday not later than at 1 p.m., a BRP can bid a constant and symmetric capacity ${\gamma\in\R{+}_0}$ for the subsequent week \cite{Swissgrid2017}.\footnote{More precisely, multiple bids consisting of capacity-price pairs can be submitted.} If the bid is accepted, the ISO has the right, but not the obligation, to ask the BRP to deviate from the planned power reference by at most $\pm\gamma$ units of power at any time during the subsequent week. The SFR tendering period is denoted by $T\rsv$ (1 week in Switzerland). We choose the planning horizon to comprise a single SFR tendering period, \ie, ${\hor=T\rsv=1}$ week. In return for keeping the reserve capacity available over the time $T\rsv$, the BRP receives the capacity reservation payment ${R\rsv:=c\rsv\gamma}$, where $c\rsv$ is the reserve capacity price  \cite{Swissgrid2017}. 

\textit{2) Day-ahead energy market:} On the day-ahead energy market, a BRP can trade energy for every time interval of duration $T\da$ of the next day. The Swiss day-ahead market runs on an hourly timescale, \ie, ${T\da=1}$ h, and closes at 11 a.m. on the day before delivery \cite{Abrell2016}. The outcome of the day-ahead markets is the energy schedule ${e\da\in\R{N\da}}$ of energy quantities to be produced or consumed during each of the ${N\da=7\cdot 24=168}$ time intervals in the planning horizon $\hor$. The day-ahead energy procurement costs are ${C\da:=c^{\ts{DA}\trans} e\da}$, where ${c\da\in\R{N\da}}$ denotes the day-ahead market clearing prices, which are unknown at the time of bidding.

\textit{3) Intra-day energy market:} To make adjustments to the day-ahead energy schedule during the day of delivery, the BRP can trade energy continuously on the intra-day markets for time intervals of durations ${T\id}$. 
We focus on quarter-hourly energy block products, \ie, ${T\id=15}$ min, as traded on the Swiss intra-day market with a lead time of 1 hour \cite{Abrell2016} and denote by $N\id$ the number of intra-day time slots in the planning horizon $\hor$, \ie, ${N\id=7\cdot 24\cdot 4=672}$. The energy procurement costs of the intra-day energy schedule ${e\id\in\R{N\id}}$ are ${C\id:=c^{\ts{ID}\trans} e\id}$, where ${c\id\in\R{N\id}}$ denotes the intra-day market prices, which are unknown at the time of bidding.

\subsection{Online decisions}
As will be discussed in detail in \sect\ref{s:reqsPowerRef}, the day-ahead and intra-day energy schedules of a BRP are translated into a continuous-time power reference. The BRP is responsible for ensuring that its balance group follows this power reference in order to comply with all energy contracts concluded on the markets. BRPs are held accountable for any mismatches between their actual and planned energy schedules that arise from unforeseen outages of generation units or inaccuracies of load forecasts. The ISO constantly observes the state of the power grid and, in case of imbalances, takes measures to reestablish the nominal operating state. SFR is activated to bring the grid frequency back to its nominal value and make actual energy profiles match their corresponding schedules \cite{Swissgrid2017}. This is achieved by adjusting the power consumption or generation of all energy resources that have offered SFR according to an activation signal computed and broadcast by the ISO \cite{ENTSO-E2009}.
Power deviations due to SFR activation can result in energy deviations from the energy schedule of a BRP. In Switzerland, this so-called up- and down-regulation energy is measured over each time interval $T\id$ separately for positive and negative activation, and is denoted by $e\regup$ and $e\regdn$, respectively. The BRP is remunerated for tracking the activation signal based on the amount of up- and down-regulation energy delivered, \ie, the remuneration is  ${R\reg:=c^{\ts{up}\trans} e\regup - c^{\ts{dn}\trans} e\regdn,}$
where ${c\regup,c\regdn\in\R{N\id}}$ are the corresponding regulation energy prices \cite{Swissgrid2017}.


\subsection{The energy and reserve bidding problem}
\label{ss:problemDescription}
The energy and reserve bidding problem of the BRP is to decide how much reserve capacity $\gamma$ to offer on the SFR market over the planning horizon $\hor$ and how to trade day-ahead and intra-day energy $e\da$ and $e\id$, respectively, such as to maximize the total expected profit. These decisions must be made subject to the constraints that the reference power trajectory associated with the energy schedules $e\da$ and $e\id$ \textit{i)} meets the energy needs and physical constraints of the systems in the balance group, \textit{ii)} complies with the energy contracts concluded on the energy markets, and \textit{iii)} keeps the SFR power capacity $\gamma$ offered to the ISO available at all times. The more reserve capacity the BRP offers, the more restrictive become the constraints on possible trades in the energy markets. Consequently, offering reserves and trading energy constitutes a trade-off between the reserve reward $R\rsv$, the potential remuneration of regulation energy $R\reg$, and the energy procurement costs ${C\da+C\id}$.

\subsection{Different timescales}
The bidding problem takes place on different timescales. Offline decisions are made on the timescales $T\rsv$, $T\da$, and $T\id$ corresponding to the reserve market, and the day-ahead and the intra-day energy market, respectively. Online activation of SFR occurs at the faster timescale $T\ctrl$. The dynamics of the system are discretized on an intermediate timescale $T\sys$, which depends on the time constants of the system, the frequency with which set points can be changed, and possible computational challenges of the resulting problem. Commonly, $T\sys$ will lie in the range of 5--15 min. In general, it holds that ${\hor\geq T\rsv\geq T\da\geq T\id\geq T\sys\gg T\ctrl}$. For simplicity, we assume that longer time horizons are integer multiples of shorter ones, and for ${\ast\in\{\ts{RES},\ts{DA},\ts{ID},\ts{S},\ts{C}\}}$ define ${N^{\ast}=\hor/T^{\ast}}$, ${\set{N}^{\ast}=\{1,\dots,N^{\ast}\}}$, and $\set{T}^{\ast}(k)$ the continuous time interval ${[(k-1)T^{\ast},kT^{\ast})}$.
%
%
Different timescales are used in different market regions. All timescales $T^{\ast}$ are used as parameters in our approach making it versatile and applicable to the various market settings. Here we consider the Swiss market setting and use ${\hor=T\rsv=1}$ week, ${T\da=1}$ h, ${T\id=15}$ min, ${T\sys=5}$ min, and ${T\ctrl=1}$ s.


\section{Energy Baseline and Power Reference}
\label{s:reqsPowerRef}

\subsection{Energy baseline}

The day-ahead and intra-day energy trades of a BRP result in the energy market baseline ${e\base\in\R{N\id}}$, which comprises the total energy trades in each interval of duration $T\id$ in the planning horizon. The energy baseline is defined as
\begin{equation}
	e\base_k:=e\da_j T\id/T\da+e\id_k,\ j:=\lceil k T\id/T\da\rceil,\ k\in\set{N}\id,
\end{equation}
and is written in vector form as ${e\base=Me\da+e\id}$ with ${e\base:=[e\base_1,\dots,e\base_{N\id}]\trans}$ and ${M\in\R{N\id\times N\da}}$.

\subsection{Power reference}
It has been common in recent works, \eg\cite{Vrettos2014a,Qureshi2016}, to associate with the energy baseline $e\base$ the piece-wise \emph{constant} continuous-time power reference 
\begin{equation}
p(t):= e\base_k/T\id,\ t\in\intTid,\ k\in\set{N}\id,
\label{eq:defPrefConst}
\end{equation}
that the BRP is committed to follow. Whereas the energy content of $p(t)$ over each time interval $\intTid$ matches $e\base_k$ as required by market contracts, implementing the instantaneous changes of $p(t)$ at the interface of subsequent market time intervals $\intTid$ would require infinite ramping capabilities. Inaccurate tracking of the power reference \eqref{eq:defPrefConst} can cause the grid frequency to deviate from its nominal value when switching between market time intervals, making it necessary to activate frequency reserves \cite{Borsche2014,Morales-Espana2014}.   
To alleviate this difficulty, we propose to translate the energy baseline $e\base$ into a continuous-time piece-wise \emph{affine} \cite{Morales-Espana2014} and continuous power reference $p\rf(t)$ whose breakpoint intervals are each of duration $T\sys$. The continuous-time power reference $p\rf(t)$ is fully defined by its breakpoint values ${p\rf:=[p\rf_0,\dots,p\rf_{N\sys}]\trans}$ according to
\begin{equation}
	p\rf(t):=p\rf_{s-1}+(p\rf_{s}-p\rf_{s-1})(t-(s-1)T\sys)/T\sys,
	\label{eq:defContPref}
\end{equation}
for ${t\in\intTsys}$, ${s\in\set{N}\sys}$. An example power reference is given in \fig\ref{fig:baseline}.
In accordance with ENTSO-E standards \cite{ENTSO-E2009}, the power reference comprises $N\id$ periods during which the power is constant and equal to ${e\base_k/T\id}$, ${k\in\set{N}\id}$. These periods of constant power are connected by linear ramps, each of duration $T\rp$ as illustrated in \fig\ref{fig:baseline}. We assume that $T\rp$ is an integer multiple of $T\sys$ and that ${0\leq T\rp\leq T\id}$. For the Swiss case of ${T\id=15}$ min, ${T\rp=10}$ min, and ${T\sys=5}$ min, the relation between $e\base$ and $p\rf$ is the following:
\begin{align}
p\rf_s =
\begin{cases}
	e\base_1/T\id, & s=0,\\
	e\base_k/T\id, & s=3k-2,3k-1,\\
	& k\in\set{N}\id,\\	
	(e\base_k+e\base_{k+1})/(2T\id), & s=3k,\,k\in\set{N}^{\ts{ID}-1},\\
	e\base_{N\id}/T\id, & s=N\sys.
\end{cases}
\label{eq:mapEtoPref}
\end{align} 
Thus, the power reference can be expressed compactly as
\begin{equation}
	p\rf = Re\base=RMe\da+Re\id,
	\label{eq:defPowerRef}
\end{equation}
where we encode \eqref{eq:mapEtoPref} by ${R\in\R{(N\sys+1)\times N\id}}$. 
Note that \eqref{eq:mapEtoPref} reverts to \eqref{eq:defPrefConst} in the limit case ${T\rp=0}$. For all ${T\rp>0}$, our choice  \eqref{eq:defContPref} is continuous and, thus, allows us to model the power-ramp rates. 
%
%
We make the assumption that a BRP with energy baseline $e\base$ will use the piece-wise affine continuous-time power trajectory $p\rf(t)$ \eqref{eq:defContPref} defined by its breakpoint values $p\rf$ \eqref{eq:defPowerRef} as its power reference.


%
\begin{figure}[tb]
\setlength{\abovecaptionskip}{-3pt}
\setlength{\belowcaptionskip}{-5pt}
\centering
\begin{pspicture*}(-0.6,-0.75)(8.5,4)
	
	\pspolygon[fillstyle=solid,fillcolor=lightgray,linestyle=none](0,0.5)(1,0.5)(2,3.0)(2.5,3.0)(3.5,1)(4,1)(5,2)(5.2,2)(5,1.5)(5.2,1)(5,0.5)(5.2,0)(0,0)
	\pspolygon[fillstyle=solid,fillcolor=lightgray,linestyle=none](5.3,2)(5.5,2)(6.5,1.5)(7.5,1.5)(7.5,0)(5.3,0)(5.1,0.5)(5.3,1)(5.1,1.5)
	
	\psline[linestyle=solid,linecolor=gray]{}(0,-0.6)(0,0)
	\psline[linestyle=solid,linecolor=gray]{}(1.5,-0.6)(1.5,3.2)
	\psline[linestyle=solid,linecolor=gray]{}(3.0,-0.08)(3.0,3.2)
	\psline[linestyle=solid,linecolor=gray]{}(4.5,-0.6)(4.5,3.2)
	\psline[linestyle=solid,linecolor=gray]{}(6.0,-0.6)(6.0,3.2)
	\psline[linestyle=solid,linecolor=gray]{}(7.5,-0.6)(7.5,3.2)
	
	\psline[linestyle=dotted,dotsep=1pt,linecolor=gray]{}(2,-0.6)(2,3.0)
	\psline[linestyle=dotted,dotsep=1pt,linecolor=gray]{}(2.5,-0.6)(2.5,3.0)
	\psline[linestyle=dotted,dotsep=1pt,linecolor=gray]{}(3.5,-0.6)(3.5,1.0)
	
	\psline[]{-}(0,0)(5.2,0)\psline[]{->}(5.3,0)(7.8,0)
	\psline[]{->}(0,0)(0,3.5)
	\psline[]{}(5.1,-0.1)(5.3,0.1) 
	\psline[]{}(5.2,-0.1)(5.4,0.1) 
	\rput[c](7.95,0){$t$}
	\rput[bl](-0.05,3.55){Power}
	
	\psline[]{}(0,0.5)(0.5,0.5)(1,0.5)(1.5,1.75)(2,3.0)(2.5,3.0)(3,2)(3.5,1)(4,1)(4.5,1.5)(5,2)
	\qdisk(0,0.5){2pt}
	\qdisk(0.5,0.5){2pt}
	\qdisk(1,0.5){2pt}
	\qdisk(1.5,1.75){2pt}
	\qdisk(2,3.0){2pt}
	\qdisk(2.5,3.0){2pt}
	\qdisk(3,2){2pt}
	\qdisk(3.5,1){2pt}
	\qdisk(4,1){2pt}
	\qdisk(4.5,1.5){2pt}
	\qdisk(5,2){2pt}
	\psline[]{}(5,2)(5.2,2)
	\psline[]{}(5.3,2)(5.5,2)
	\psline[]{}(5.5,2)(6,1.75)(6.5,1.5)(7,1.5)(7.5,1.5)
	\qdisk(5.5,2){2pt}
	\qdisk(6,1.75){2pt}
	\qdisk(6.5,1.5){2pt}
	\qdisk(7,1.5){2pt}
	\qdisk(7.5,1.5){2pt}
	\psline[]{}(5.1,1.9)(5.3,2.1) 
	\psline[]{}(5.2,1.9)(5.4,2.1) 
		
	\psline[]{<->}(0,-0.5)(1.5,-0.5)
	\rput[t](0.75,-0.15){$T\id$}
	
	\psline[]{<->}(2,-0.5)(2.5,-0.5)
	\rput[t](2.25,-0.15){$T\sys$}
	
	\psline[]{<->}(2.5,-0.5)(3.5,-0.5)
	\rput[t](3,-0.15){$T\rp$}
	
	\rput[b](0.7,0.05){$e_1\rf$}
	\rput[b](2.25,0.05){$e_2\rf$}
	\rput[b](3.75,0.05){$e_3\rf$}
	\rput[b](6.75,0.05){$e_{N\id}\rf$}

	\psline[]{->}(0.8,1.5)(0.8,0.5)\rput[b](0.7,1.5){$e\base_1/T\id$}
	\psline[]{->}(2.25,3.4)(2.25,3)\rput[b](2.25,3.4){$e\base_2/T\id$}
	\psline[]{->}(3.75,2.5)(3.75,1.0)\rput[b](3.75,2.5){$e\base_3/T\id$}
	\psline[]{->}(6.75,2.5)(6.75,1.5)\rput[b](6.75,2.5){$e\base_{N\id}/T\id$}
	
	\rput[b](-0.25,0.45){$p\rf_0$}
	\rput[b](0.5,0.6){$p\rf_1$}
	\rput[b](1.4,0.4){$p\rf_2$}
	\rput[b](1.9,1.6){$p\rf_3$}
	\rput[b](1.8,2.9){$p\rf_4$}
	\rput[b](2.9,2.9){$p\rf_5$}
	\rput[br](2.9,1.8){$p\rf_6$}
	\rput[br](3.4,0.8){$p\rf_7$}
	\rput[br](4.6,0.8){$p\rf_8$}
	\rput[br](4.5,1.5){$p\rf_{9}$}
	\rput[b](7.9,1.3){$p\rf_{N\sys}$}
	
\end{pspicture*}
\caption{The piece-wise affine continuous-time power reference $p\rf(t)$ (solid black line) defined by its breakpoint values ${p\rf:=[p\rf_0,\dots,p\rf_{N\sys}]\trans}$ with breakpoint intervals $T\sys$. The energy content (shaded gray area) of $p\rf(t)$ over the time interval $\intTid$ is $e_k\rf$. The power ramps of duration $T\rp$ are located symmetrically at the intersections of market time intervals.} 
\label{fig:baseline}
\end{figure}
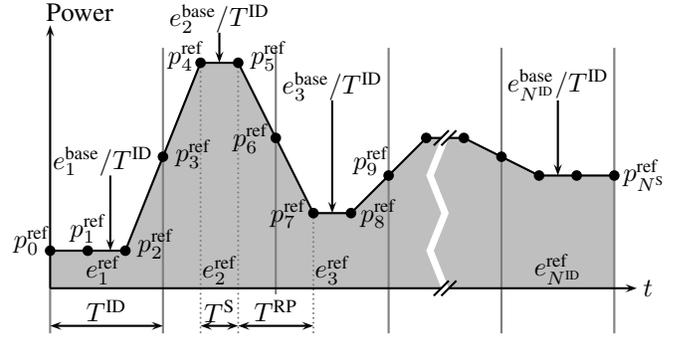

\subsection{Activation of SFR}
\label{ss:activationOfSFR}
The SFR service is activated by the ISO by sending out an activation signal to all BRPs that offer SFR for the corresponding time interval. The activation signal ${w:=[w_1,\dots,w_{N\ctrl}]\trans}$ is a discrete-time signal with ${w_l\in\set{W}:=[-1,1]}$, ${l=1,\dots,N\ctrl}$. The signal is computed by the ISO and broadcast \emph{sequentially} on the timescale $T\ctrl$ (\eg, ${T\ctrl=1-5}$ s in Continental Europe \cite{ENTSO-E2009}, ${T\ctrl=1}$ s in Switzerland \cite{Lymperopoulos2015}, and ${T\ctrl=4-6}$ s in the USA \cite{FERC2011}). That is, the activation $w_l$ becomes available to the BRP only at time $(l-1)T\ctrl$. We interpret the discrete signal $w$ as the continuous-time piece-wise affine and continuous activation signal
\begin{equation}
	w(t):= w_{l-1}+(w_l-w_{l-1})(t-(l-1)T\ctrl)/T\ctrl,
\end{equation}
where ${t\in\intTc}$, ${l\in\set{N}^{C}}$. We denote the set of all such functions $w(t)$ by $\mathcal{W}$, which is a closed set of continuous functions.
%

The target power level $p\opt(t)$ that the BRP is responsible to track continuously is a function of the continuous time $t$ and the activation $w(t)$ and is given by
\begin{equation}
	p\opt(t,w):=p\rf(t)+\gamma w(t).
	\label{eq:defTargetPower}
\end{equation}
%
%
%
Compared with the power reference $p\rf(t)$, which is piece-wise affine on the timescale $T\sys$, the activation $w(t)$ can vary at the higher rate $T\ctrl$. Thus, accurate tracking of the target power level \eqref{eq:defTargetPower} is challenging. The minimum required tracking accuracy of $p\opt(t)$ is specified by the ISOs individually (\cf\cite{TestSC} for details on the Swiss requirements). In the USA, ISOs are required to base their regulation compensation payment $R\reg$ not only on the quantity of regulation provided, but must also take into account how accurately the activation signal has been tracked \cite{FERC2011}. This motivates the use of a continuous power reference according to \eqref{eq:defPowerRef} and the explicit consideration of ramp-rate constraints. 


\subsection{Adjustable energy schedules}
\label{ss:adpatingSchedules}

The decision on the amount of SFR to be offered is made once for each reserve tendering period and cannot be changed afterwards. In contrast, energy schedules can be adjusted from day to day or intra-day by trading energy on respective markets. These adjustments can depend on past SFR activation, for example. Let ${\tilde{w}\in\set{W}^{N\avgw}}$ be the activation signal $w(t)$ averaged over time intervals of duration $T\avgw$, \ie,
\begin{equation}
	\tilde{w}_s:=\frac{1}{T\sys}\int_{(s-1)T\sys}^{T\sys} w(t) dt,\ s\in\set{N}\sys.
\end{equation}
%
%
Following the work in \cite{Warrington2012,Warrington2013,Zhang2014,Fabietti2016,Gorecki2017}, we express the adjustable day-ahead and intra-day energy schedules as affine functions of the averaged activation signal:
\begin{align}
\begin{split}
	e\da(\tilde{w}) &= Q\da A\da \tilde{w}+q\da,\\
	e\id(\tilde{w}) &= Q\id A\id \tilde{w}+q\id, 
	\label{eq:defControllers}
\end{split}
\end{align}
with parameters ${Q\da\in\R{N\da\times N\da}}$, ${q\da\in\R{N\da}}$, ${Q\id\in\R{N\id\times N\id}}$, and ${q\id\in\R{N\id}}$. The matrices ${A\da\in\R{N\da\times N\avgw}}$ and ${A\id\in\R{N\id\times N\avgw}}$ are used to average $\tilde{w}$ over time intervals of duration $T\da$ and $T\id$, respectively. For instance, $A\id$ is block-diagonal with each block given by a row-vector of length $T\id/T\sys$ whose elements all equal $T\sys/T\id$. By plugging \eqref{eq:defControllers} into \eqref{eq:defPowerRef} we can write the power reference as an affine function of the averaged activation:
\begin{equation}
	p\rf(\tilde{w}) = R(MQ\da A\da + Q\id A\id)\tilde{w}+R(Mq\da + q\id).
	\label{eq:pRefDepW}
\end{equation} 

The policies \eqref{eq:defControllers} describe how much energy should be traded on the day-ahead and intra-day energy markets depending on the average amount of SFR activated during past time intervals. The policies must be causal, \ie, they can only depend on past activation, and must respect the different lead times of the markets. These requirements can be incorporated by imposing a particular structure onto $Q\da$ and $Q\id$. Figure \ref{fig:structureQ} illustrates which elements of $Q\da$ and $Q\id$ can be nonzero. Because the day-ahead market has a lead time of 13 h, the trading decisions for the second day can only depend on the first 11 h of the first day. Similarly, the ${4\cdot 15}$ min lead time of the intra-day market forces the first 5 rows of $Q\id$ to be zero. The size of the bidding problem can be reduced by not taking into account all the available historic activation data in \eqref{eq:defControllers} but only considering the $N\da_{\ts{lb}}$ and $N\id_{\ts{lb}}$ most recent measurements. The policy parameters $N\da_{\ts{lb}}$ and $N\id_{\ts{lb}}$ determine the width of the blocks in $Q\da$ and the number of off-diagonals in the lower-triangular part of $Q\id$, respectively, \cf\fig\ref{fig:structureQ}.

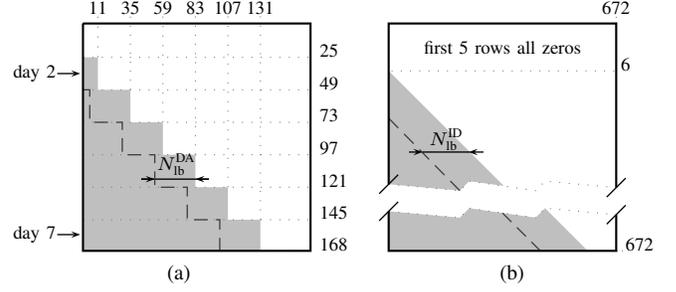
\begin{figure}[tb] 
\setlength{\belowcaptionskip}{-5pt}
\centering
\psset{unit=0.018}
\subfloat[]{
\begin{pspicture*}(-55,18)(197,-169)

	\pspolygon[fillstyle=solid,fillcolor=lightgray,linestyle=none](0,-25)(11,-25)(11,-49)(35,-49)(35,-73)(59,-73)(59,-97)(83,-97)(83,-121)(107,-121)(107,-145)(131,-145)(131,-168)(0,-168)(0,0)

	\psline[linewidth=0.5pt]{->}(-19,-37)(-2,-37)\rput[r](-20,-37){\scriptsize day 2}
	\psline[linewidth=0.5pt]{->}(-19,-156)(-2,-156)\rput[r](-20,-156){\scriptsize day 7}
	
	\psline[linecolor=gray,linestyle=dotted,linewidth=0.5pt]{-}(0,-25)(173,-25)
	\psline[linecolor=gray,linestyle=dotted,linewidth=0.5pt]{-}(0,-49)(173,-49)
	\psline[linecolor=gray,linestyle=dotted,linewidth=0.5pt]{-}(0,-73)(173,-73)
	\psline[linecolor=gray,linestyle=dotted,linewidth=0.5pt]{-}(0,-97)(173,-97)
	\psline[linecolor=gray,linestyle=dotted,linewidth=0.5pt]{-}(0,-121)(173,-121)
	\psline[linecolor=gray,linestyle=dotted,linewidth=0.5pt]{-}(0,-145)(173,-145)
	\psline[linecolor=gray,linestyle=dotted,linewidth=0.5pt]{-}(0,-168)(173,-168)
	
	\psline[linecolor=gray,linestyle=dotted,linewidth=0.5pt]{-}(11,-25)(11,5)
	\psline[linecolor=gray,linestyle=dotted,linewidth=0.5pt]{-}(35,-49)(35,5)
	\psline[linecolor=gray,linestyle=dotted,linewidth=0.5pt]{-}(59,-73)(59,5)
	\psline[linecolor=gray,linestyle=dotted,linewidth=0.5pt]{-}(83,-97)(83,5)
	\psline[linecolor=gray,linestyle=dotted,linewidth=0.5pt]{-}(107,-121)(107,5)
	\psline[linecolor=gray,linestyle=dotted,linewidth=0.5pt]{-}(131,-145)(131,5)
	
	\rput[l](175,-20){\scriptsize 25}
	\rput[l](175,-44){\scriptsize 49}
	\rput[l](175,-68){\scriptsize 73}
	\rput[l](175,-92){\scriptsize 97}
	\rput[l](175,-116){\scriptsize 121}
	\rput[l](175,-140){\scriptsize 145}
	\rput[l](175,-163){\scriptsize 168}
	
	\rput[B](11,7){\scriptsize 11}
	\rput[B](35,7){\scriptsize 35}
	\rput[B](59,7){\scriptsize 59}
	\rput[B](83,7){\scriptsize 83}
	\rput[B](107,7){\scriptsize 107}
	\rput[B](131,7){\scriptsize 131}
	
	\psline[linewidth=0.5pt]{-}(43,-115)(93,-115)	
	\psline[linewidth=0.5pt]{->}(43,-115)(53,-115)
	\psline[linewidth=0.5pt]{<-}(83,-115)(93,-115)
	\rput[bc](69,-104){\scriptsize $N\da_{\ts{lb}}$}
	
		\psline[linecolor=darkgray,linestyle=dashed,linewidth=0.5pt](0,-49)(5,-49)(5,-73)(29,-73)(29,-97)(53,-97)(53,-121)(77,-121)(77,-145)(101,-145)(101,-168)

	\psline[]{-}(0,0)(168,0)(168,-168)(0,-168)(0,0)
\end{pspicture*}
} 
\hfil
\subfloat[]{
\begin{pspicture*}(-15,18)(197,-169)
	\pspolygon[fillstyle=solid,fillcolor=lightgray,linestyle=none](0,-35)(0,-118.75)(52.5,-123.5)(59.5,-116.5)(85,-120)
	\pspolygon[fillstyle=solid,fillcolor=lightgray,linestyle=none](0,-138.75)(52.5,-143.5)(59.5,-136.5)(108.5,-143.5)(115.5,-136.5)(147,-168)(0,-168)

	\rput[c](84,-17.5){{\scriptsize first 5 rows all zeros}}

	\psline[linecolor=darkgray,linestyle=dashed,linewidth=0.5pt]{-}(0,-70)(52.5,-123.5)
	\psline[linecolor=darkgray,linestyle=dashed,linewidth=0.5pt]{-}(84,-140)(112,-168)
	
	\psline[linewidth=0.5pt]{-}(25,-95)(60,-95)
	\psline[linewidth=0.5pt]{<-}(59,-95)(70,-95)
	\psline[linewidth=0.5pt]{->}(15,-95)(25,-95)
	\rput[b](42.5,-94){\scriptsize $N\id_{\ts{lb}}$}

	\psline[linecolor=gray,linestyle=dotted,linewidth=0.5pt]{-}(168,-168)(173,-168)
	\psline[linecolor=gray,linestyle=dotted,linewidth=0.5pt]{-}(2,-35)(173,-35)
	
	\psline[linecolor=gray,linestyle=dotted,linewidth=0.5pt]{-}(168,0)(168,5)
	
	\psline[linecolor=gray,linestyle=dotted,linewidth=0.5pt]{-}(0,-118.75)(52.5,-123.5)(59.5,-116.5)(108.5,-123.5)(115.5,-116.5)(168,-121.25)
	\psline[linecolor=gray,linestyle=dotted,linewidth=0.5pt]{-}(0,-138.75)(52.5,-143.5)(59.5,-136.5)(108.5,-143.5)(115.5,-136.5)(168,-141.25)

	\rput[l](175,-163){\scriptsize 672}
	\rput[B](175,-35){\scriptsize 6}
	
	\rput[B](168,7){\scriptsize 672}
	
	\psline[]{}(0,-118.75)(0,0)(168,0)(168,-121.25)
	\psline[]{}(0,-138.75)(0,-168)(168,-168)(168,-141.25)

	\psline[linewidth=0.5pt]{}(-7,-125.75)(7,-111.75)
	\psline[linewidth=0.5pt]{}(-7,-145.75)(7,-131.75)
	\psline[linewidth=0.5pt]{}(161,-128.25)(175,-114.25)
	\psline[linewidth=0.5pt]{}(161,-148.25)(175,-134.25)

\end{pspicture*}
} 
\caption{Structure of $Q\da$ (a) and $Q\id$ (b) for the case ${N\da=168}$ and ${N\id=672}$. The shaded grey areas denote the location of potentially nonzero elements. The amount of historic activation data considered by the policies \eqref{eq:defControllers} is governed by the parameters $N\da_{\ts{lb}}$ and $N\id_{\ts{lb}}$.} 
\label{fig:structureQ} 
\end{figure}

\section{Description of flexibility}
\label{s:flexDescription}
The decisions a BRP makes on the different energy and reserve markets are restricted by the requirement that the target power trajectory \eqref{eq:defTargetPower} must be feasible for all realizations of the unknown activation signal and must satisfy all physical constraints of the system. 
We consider a system with power, power ramp-rate, and state (\eg energy) constraints. 

\subsection{Power constraints}
The power the system can draw from or feed into the power grid is limited. For all ${t\in\intTsys}$, ${s\in\set{N}\sys}$, we require that
\begin{equation}
	\un{p}_s \leq p\opt(t,w)\leq \bar{p}_s,\ \forall w\in\set{W}^{N\ctrl}, 
	\label{eq:powerConstrCt}
\end{equation}
where ${\un{p},\bar{p}\in\R{N\sys}}$ denote the piece-wise constant bounds on power. The constraints above are satisfied if and only if for all ${\tilde{w} \in\set{W}^{N\sys}}$ it holds that
\begin{align}
\begin{split}
p\rf_s(\tilde{w}) + \gamma \leq
\begin{cases}
	\bar{p}_1, & s=0,\\
	\min\{\bar{p}_{s},\,\bar{p}_{s+1}\}, & s=1,\dots,N\sys-1,\\
	\bar{p}_{N\sys}, & s=N\sys,\\
\end{cases}\\
p\rf_s(\tilde{w}) - \gamma \geq
\begin{cases}
	\un{p}_1, & s=0,\\
	\max\{\un{p}_{s},\,\un{p}_{s+1}\}, & s=1,\dots,N\sys-1,\\
	\un{p}_{N\sys}, & s=N\sys.\\
\end{cases}
\end{split}
\end{align}

The robust counterparts of the above constraints are derived according to \cite{Bertsimas2011} by computing the maximum and minimum of $p\rf(\tilde{w})$ with regard to $\tilde{w}$ according to \eqref{eq:pRefDepW}. An analytic solution exists because each element of $\tilde{w}$ is restricted to $\set{W}$: the power constraints \eqref{eq:powerConstrCt} are satisfied if and only if for all ${\tilde{w} \in\set{W}^{N\sys}}$
\begin{align}
\begin{split}
	p\rf(\tilde{w}) &\leq \abs{R(MQ\da A\da + Q\id A\id)}\mathbf{1}+R(Mq\da+q\id),\\
	p\rf(\tilde{w}) &\geq -\abs{R(MQ\da A\da + Q\id A\id)}\mathbf{1}+R(Mq\da+q\id),
\end{split}
	\label{eq:powerConstrRobust}
\end{align}
where $\mathbf{1}$ denotes a column vector of ones of appropriate size.


\subsection{Power ramp-rate constraints}
\label{ss:rampConstr}	
Limits on the rate at which power can vary over time play an important role, in particular for providing ancillary services with high accuracy. For all ${t\in\intTsys}$, ${s\in\set{N}\sys}$, we require that the rate of change of the target power be bounded, \ie,
\begin{equation}
	\un{r}_s \leq \partial p\opt(t,w)/\partial t \leq \bar{r}_s,\ \forall w\in\set{W}^{N\ctrl}, 
	\label{eq:rampConstrCt}
\end{equation}
where ${\un{r},\bar{r}\in\R{N\sys}}$ denote the piece-wise constant power ramp-rate limits. The above inequalities are satisfied if and only if for all ${\tilde{w} \in\set{W}^{N\sys}}$ it holds that
\begin{equation}
\begin{split}
	(p\rf_{s}(\tilde{w})-p\rf_{s-1}(\tilde{w}))/T\sys+2\gamma/T\ctrl &\leq \bar{r}_s,\\
	(p\rf_{s}(\tilde{w})-p\rf_{s-1}(\tilde{w}))/T\sys-2\gamma/T\ctrl &\geq \un{r}_s,\ s\in\set{N}\sys.
	\label{eq:rampConstr}
\end{split}
\end{equation}
The above inequalities illustrate that ramp-rate constraints limit the sum of the rate of change of the power reference and the rate of change of activated regulation power. They are equivalent to
\begin{align}
\begin{split}
&\hspace*{4mm}\abs{(R_{s+1}-R_{s})(MQ\da A\da + Q\id A\id)}\mathbf{1}\\
&\hspace*{2mm}+(R_{s+1}-R_s)(Mq\da+q\id)+2\gamma T\sys/T\ctrl \leq \bar{r}_s T\sys,\\
&-\abs{(R_{s+1}-R_{s})(MQ\da A\da + Q\id A\id)}\mathbf{1}\\
&\hspace*{2mm}+(R_{s+1}-R_{s})(Mq\da+q\id)-2\gamma T\sys/T\ctrl \geq \un{r}_s T\sys,
\end{split}
\label{eq:rampConstrRobust}
\end{align}
with ${s\in\set{N}\sys}$. The case of maximum negative ramp rates is illustrated in \fig\ref{fig:ramps}, where SFR activation changes from maximum positive activation (${w_{l-1}=1}$) to maximum negative activation (${w_{l}=-1}$) within one control time step $T\ctrl$. In this case, maximum negative ramp rates are required during the time interval $[(l-1)T\ctrl,lT\ctrl]$. 

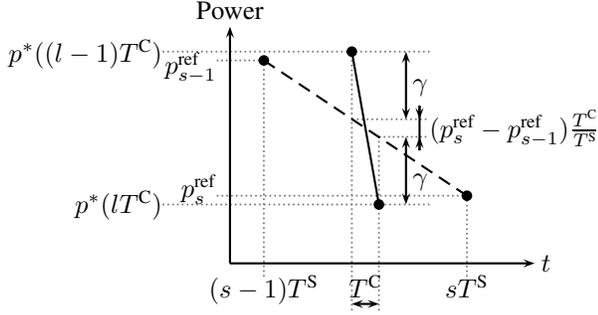
\begin{figure}[tb]
\setlength{\belowcaptionskip}{-5pt}
\centering
\psset{unit=0.9}
\begin{pspicture*}(-3.3,-0.7)(5.4,3.9)
	
	\psline[linestyle=dotted,dotsep=1pt,linecolor=gray]{}(0.5,-0.2)(0.5,3.0)
	\psline[linestyle=dotted,dotsep=1pt,linecolor=gray]{}(3.5,-0.2)(3.5,1.0)
	\psline[linestyle=dotted,dotsep=1pt,linecolor=gray]{}(1.8,-0.7)(1.8,3.13)
	\psline[linestyle=dotted,dotsep=1pt,linecolor=gray]{}(2.2,-0.7)(2.2,1.87)
	
	\psline[linestyle=dotted,dotsep=1pt,linecolor=gray]{}(-0.2,3)(0.5,3)
	\psline[linestyle=dotted,dotsep=1pt,linecolor=gray]{}(-0.2,1)(3.5,1)
	\psline[linestyle=dotted,dotsep=1pt,linecolor=gray]{}(-1,3.13)(3,3.13)
	\psline[linestyle=dotted,dotsep=1pt,linecolor=gray]{}(-1,0.87)(3,0.87)
	\psline[linestyle=dotted,dotsep=1pt,linecolor=gray]{}(1.8,2.13)(2.9,2.13)
	\psline[linestyle=dotted,dotsep=1pt,linecolor=gray]{}(2.2,1.87)(2.9,1.87)
	\psline[]{<->}(1.8,-0.6)(2.2,-0.6)	
	
	\psline[]{->}(0,0)(4.5,0)
	\psline[]{->}(0,0)(0,3.5)
	\rput[l](4.6,0){$t$}
	\rput[b](0,3.6){Power}
	
	\psline[linestyle=dashed]{-}(0.5,3)(3.5,1)
	\qdisk(0.5,3){2pt}
	\qdisk(3.5,1){2pt}
	\psline[](1.8,3.13)(2.2,0.87)
	\qdisk(1.8,3.13){2pt}
	\qdisk(2.2,0.87){2pt}
	
	\psline[]{<->}(2.6,0.87)(2.6,1.87)\rput[l](2.7,1.2){$\gamma$}
	\psline[]{<->}(2.6,2.13)(2.6,3.13)\rput[l](2.7,2.6){$\gamma$}
	\psline[]{-}(2.8,1.67)(2.8,2.33)\rput[l](2.95,2.0){${(p\rf_{s}-p\rf_{s-1})\frac{T\ctrl}{T\sys}}$}
	\psline[]{->}(2.8,1.67)(2.8,1.87)
	\psline[]{->}(2.8,2.33)(2.8,2.13)
	
	\rput[r](-0.2,2.9){$p\rf_{s-1}$}
	\rput[r](-0.2,1.1){$p\rf_{s}$}
	\rput[B](0.5,-0.5){$(s-1)T\sys$}
	\rput[B](3.5,-0.5){${sT\sys}$}
	\rput[B](2.0,-0.5){$T\ctrl$}
	\rput[r](-1.0,3.1){$p\opt((l-1)T\ctrl)$}
	\rput[r](-1,0.87){${p\opt(lT\ctrl)}$}
	
\end{pspicture*}
\caption{The target power trajectory $p\opt(t)$ (solid line) requires maximum negative ramp rates if ${w_{l-1}=-w_{l}=1}$.}
\label{fig:ramps}
\end{figure}

\subsection{State constraints}
Many types of flexible energy resources owe their flexibility to an energy buffer with limited capacity. Examples of such systems are pumped hydro-power plants, batteries, heating and cooling systems, capacitors, and flywheels. Here we consider systems whose energy buffer level ${x(t)\in\R{}}$ is governed by the linear time-invariant dynamics
\begin{equation}
	\partial x(t,w)/\partial t=ax(t,w)+bu_s+cp\opt(t,w),
	\label{eq:stateDynamics}
\end{equation}
with ${t\in\intTsys}$, ${s\in\set{N}\sys}$. Exogenous (uncontrollable) inputs, such as the weather conditions in the case of heating systems, or trips in the case of electric-vehicle batteries, are summarized by ${u:=[u_1,\dots,u_{N\sys}]\trans}$ and are assumed to be known. Because the bidding problem has to be solved ahead of time, the initial state $x(0)$ is not known precisely, but is known to lie in the interval ${[x_0^{\min},x_0^{\max}]}$. The dynamics \eqref{eq:stateDynamics} are characterized by the parameters ${\{a,b,c\}}$, where ${a\leq 0}$ can be interpreted as the self-dissipation rate of the energy buffer, and $b,c$ determine the conversion efficiencies of exogenous inputs and electric energy into buffered energy, respectively. Our sign convention implies ${c\geq 0}$. The behavior of an ideal battery, for instance, can be modeled by setting ${a=b=0}$ and ${c=1}$.

We consider the piece-wise constant state constraints
\begin{equation}
	\un{x}_{s}\leq x(t,w) \leq\bar{x}_{s},\ \forall w\in\set{W}^{N\ctrl},
	\label{eq:stateConstrCt}
\end{equation}
for all ${t\in\intTsys}$, ${s=\set{N}\sys}$, where ${\un{x},\bar{x}\in\R{N\sys}}$ denote the physical limits of the system state. The initial state is assumed to be feasible, \ie, ${\un{x}_1\leq x^{\min}_0\leq x^{\max}_0\leq\bar{x}_1}$. 

The exact evolution of the state \eqref{eq:stateDynamics} is unknown because it depends on the uncertain activation ${w\in\set{W}^{N\ctrl}}$. 
For our case of piece-wise affine power references, the state $x(t,w)$ can evolve non-monotonically during a discretization interval $\intTsys$. 
To ensure that \eqref{eq:stateConstrCt} holds would require bounding the extreme values of $x(t,w)$ for ${t\in\intTsys}$. This, however, would result in non-convex constraints. Instead, we approximate the right-hand side of \eqref{eq:stateDynamics} by affine functions of time and define for every interval $\intTsys$, ${s\in\set{N}\sys}$, the auxiliary dynamics
\begin{align}
	\dot{y}\ind{s}(\tau,\tilde{w}) :=&\,a\bar{x}_s+bu_s+c(p\rf_{s-1}(\tilde{w})-\gamma)\nonumber\\
	&+[\min\{0,a\bar{x}_s+bu_s+c(p\rf_{s}(\tilde{w})-\gamma)\}\label{eq:defAuxDyn1}\\
	&-(a\bar{x}_s+bu_s+c(p\rf_{s-1}(\tilde{w})-\gamma))] \tau/T\sys,\nonumber\\
	\dot{z}\ind{s}(\tau,\tilde{w}) :=&\,a\un{x}_s+bu_s+c(p\rf_{s-1}(\tilde{w})+\gamma)\nonumber\\
	&+[\max\{0,a\un{x}_s+bu_s+c(p\rf_{s}(\tilde{w})+\gamma)\}\label{eq:defAuxDyn2}\\
	&-(a\un{x}_s+bu_s+c(p\rf_{s-1}(\tilde{w})+\gamma))]\tau/T\sys,\nonumber
\end{align}
with ${\tau\in [0,T\sys]}$, ${y\ind{1}(0)=x_0^{\min}}$, ${z\ind{1}(0)=x_0^{\max}}$, and ${y\ind{s}(0,\tilde{w})=z\ind{s}(0,\tilde{w})=x((s-1)T\sys,\tilde{w})}$. For brevity we use $\dot{y}\ind{s}(\tau,\tilde{w})$ to denote $\partial y\ind{s}(\tau,\tilde{w})/\partial\tau$

\begin{prop}
	If for all ${s\in\set{N}\sys}$ and ${\tilde{w}\in\set{W}^{N\sys}}$
	\begin{equation}
		y\ind{s}(\tau,\tilde{w}) \geq \un{x}_s\ts{ and }z\ind{s}(\tau,\tilde{w}) \leq \bar{x}_s	,\,\tau\in [0,T\sys],
		\label{eq:yzConstrCt}
	\end{equation}	
	then the state constraints \eqref{eq:stateConstrCt} are satisfied for all ${t\in\intTsys}$.
\end{prop}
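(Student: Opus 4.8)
The plan is to show that the auxiliary trajectories $y\ind{s}$ and $z\ind{s}$ lower- and upper-bound the true state $x(\cdot,w)$ on each interval $\intTsys$, so that the constraints \eqref{eq:yzConstrCt} on $y\ind{s},z\ind{s}$ imply \eqref{eq:stateConstrCt} on $x$. I would argue interval by interval, inductively in $s$, establishing two invariants simultaneously: (i) $y\ind{s}(\tau,\tilde w)\le x((s-1)T\sys+\tau,w)\le z\ind{s}(\tau,\tilde w)$ for all $\tau\in[0,T\sys]$ and all $w\in\set{W}^{N\ctrl}$, and (ii) the endpoint of the true state $x(sT\sys,w)$ lies between the reinitialized values used for the next interval. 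The base case $s=1$ uses $y\ind{1}(0)=x_0^{\min}\le x(0)\le x_0^{\max}=z\ind{1}(0)$; the inductive step uses the reinitialization $y\ind{s}(0,\tilde w)=z\ind{s}(0,\tilde w)=x((s-1)T\sys,w)$ together with (i) at step $s-1$.

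The heart of the argument is a differential-inequality (comparison) estimate on a single interval. Fix $s$ and $w$, write $t=(s-1)T\sys+\tau$, and recall from \eqref{eq:stateDynamics} that $\dot x = a x + b u_s + c\,p\opt(t,w)$ with $p\opt(t,w)=p\rf(t)+\gamma w(t)$. For the lower bound I would compare $x$ against $y\ind{s}$: since $a\le 0$, on the region $x\ge y\ind{s}$ one has $a x \ge a y\ind{s}$ only after also controlling the forcing term, so the cleaner route is to bound the \emph{entire right-hand side} of \eqref{eq:stateDynamics} from below by $\dot y\ind{s}(\tau,\tilde w)$ whenever $x\ge\un x_s$ (which holds as long as the bound we are proving has not yet been violated — a standard "first exit time" argument). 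Concretely, $a x \ge a\bar x_s$ fails in general, so instead use $x\le\bar x_s$ to get $ax\ge a\bar x_s$ is wrong in sign; rather, since $a\le 0$ and $x\le\bar x_s$ gives $ax\ge a\bar x_s$ — yes this is the correct direction — combined with $c\ge 0$ and the fact that $p\rf(t)+\gamma w(t)\ge p\rf(t)-\gamma$, and $p\rf(t)$ being affine between $p\rf_{s-1}$ and $p\rf_s$, so $c(p\rf(t)-\gamma)$ is the affine interpolation between $c(p\rf_{s-1}(\tilde w)-\gamma)$ and $c(p\rf_s(\tilde w)-\gamma)$. The $\min\{0,\cdot\}$ appearing in \eqref{eq:defAuxDyn1} is precisely what makes $\dot y\ind{s}$ a \emph{concave-from-below} affine underestimator of $t\mapsto a\bar x_s+bu_s+c(p\rf(t)-\gamma)$ on $[0,T\sys]$: the affine function through the two endpoint values of the true (affine) forcing is underestimated by taking the second endpoint down to $\min\{0,\cdot\}$, and since a line lies between its endpoint values clamped appropriately, $\dot y\ind{s}(\tau,\tilde w)\le a\bar x_s+bu_s+c(p\rf(t)-\gamma)\le \dot x(t,w)$. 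Then $x(t,w)-y\ind{s}(\tau,\tilde w)$ has nonnegative derivative wherever it is zero, and starts at $0$, so it stays nonnegative. The upper bound via $z\ind{s}$ is symmetric, using $x\ge\un x_s$, $p\rf(t)+\gamma w(t)\le p\rf(t)+\gamma$, and the $\max\{0,\cdot\}$ term.

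The main obstacle is making the "first exit time" / continuation argument rigorous: a priori we only know $x\le\bar x_s$ (needed for the lower-bound forcing estimate) on the sub-interval where the conclusion already holds, so one must set $\theta:=\inf\{\tau: x>\bar x_s \text{ or } x<\un x_s\}$, show on $[0,\theta]$ that $x\ge y\ind{s}\ge\un x_s$ and $x\le z\ind{s}\le\bar x_s$ by the comparison argument above (which is valid there because the forcing estimates hold), conclude $\theta=T\sys$ by continuity, and thereby bootstrap the bound to the whole interval. A secondary subtlety is that the clamped endpoint value in $\dot y\ind{s}$ uses $\bar x_s$ (not the true $x$-value at $sT\sys$), so one should double-check that replacing $x$ by $\bar x_s$ in the $a x$ term of the forcing genuinely underestimates the true $\dot x$ on all of $[0,\theta]$, which it does precisely because $x\le\bar x_s$ there and $a\le0$. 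Once the per-interval comparison and the reinitialization handoff are in place, chaining over $s\in\set{N}\sys$ gives $\un x_s\le x(t,w)\le\bar x_s$ for all $t\in\intTsys$, i.e. \eqref{eq:stateConstrCt}, for every $w\in\set{W}^{N\ctrl}$ — noting that $y\ind{s},z\ind{s}$ depend on $w$ only through $\tilde w\in\set{W}^{N\sys}$, so the hypothesis \eqref{eq:yzConstrCt} quantified over $\tilde w$ suffices.
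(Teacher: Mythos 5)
Your proposal is correct, and it rests on the same central inequality as the paper's own proof: whenever $x(t,w)\in[\un{x}_s,\bar{x}_s]$, the facts $a\leq 0$, $c\geq 0$, $w(t)\in[-1,1]$, and the affinity of $p\rf(t)$ on $\intTsys$ give $\dot{y}\ind{s}(\tau,\tilde{w})\leq\dot{x}(t,w)\leq\dot{z}\ind{s}(\tau,\tilde{w})$, so that, together with \eqref{eq:yzConstrCt} and the reinitialization $y\ind{s}(0,\tilde{w})=z\ind{s}(0,\tilde{w})=x((s-1)T\sys,w)$, the state bounds propagate across each interval. Where you differ is in how the derivative sandwich is turned into a state sandwich: the paper discretizes $\intTsys$ into subintervals of length $\delta$, runs an Euler-type induction (bounds at a grid point give the derivative sandwich there, which gives approximate bounds at the next grid point via \eqref{eq:yzConstrCt}), and concludes by letting $\delta\to 0$; you instead integrate the differential inequality directly and resolve the circularity with a first-exit-time argument. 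Your route is the more rigorous of the two --- the paper's ``$\approx$'' steps and its limit $\delta\to 0$ are heuristic --- and your observation that the lower comparison needs only $x\leq\bar{x}_s$ while the upper comparison needs only $x\geq\un{x}_s$ is precisely what lets the exit-time argument close (an exit through one boundary would occur while the other comparison is still in force, contradicting $x\leq z\ind{s}\leq\bar{x}_s$ or $x\geq y\ind{s}\geq\un{x}_s$ just past the exit time); you should spell out that final contradiction rather than concluding ``$\theta=T\sys$ by continuity,'' since touching the boundary at $\theta$ is not by itself a contradiction. One small mis-attribution, harmless to the logic: the $\min\{0,\cdot\}$ and $\max\{0,\cdot\}$ clamps in \eqref{eq:defAuxDyn1}--\eqref{eq:defAuxDyn2} are not what makes $\dot{y}\ind{s}$ (resp.\ $\dot{z}\ind{s}$) an under- (over-)estimator --- the unclamped affine interpolant already equals the worst-case forcing, and clamping only pushes it further in the safe direction; their real purpose, used after the Proposition in the paper, is to force the extrema of $y\ind{s},z\ind{s}$ to the interval endpoints so that \eqref{eq:yzConstrCt} need only be imposed at $\tau\in\{0,T\sys\}$.
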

\begin{proof}
	Consider the time interval $\intTsys$ and divide it into ${\kappa\in\set{N}^+}$ subintervals, each of duration ${\delta:=T\sys/\kappa}$. For ${\tau=0}$, the constraints \eqref{eq:yzConstrCt} are equivalent to ${\un{x}_s\leq x((s-1)T\sys,w)\leq \bar{x}_s}$, ${\forall\tilde{w}\in\set{W}^{N\sys}}$. By definitions \eqref{eq:defAuxDyn1}--\eqref{eq:defAuxDyn2}, it follows that ${\dot{y}\ind{s}(0,\tilde{w})\leq \dot{x}((s-1)T\sys,w)\leq \dot{z}\ind{s}(0,\tilde{w}),\,\forall \tilde{w}\in\set{W}^{N\sys}}$.
Approximate
\begin{align}
	x((s-1)T\sys+\delta,w) &\approx x((s-1)T\sys,w) + \delta\dot{x}((s-1)T\sys,w),\\	
	y\ind{s}(\delta,\tilde{w}) &\approx x((s-1)T\sys,w)+\delta\dot{y}\ind{s}(0,\tilde{w}),\\	
	z\ind{s}(\delta,\tilde{w}) &\approx x((s-1)T\sys,w)+\delta\dot{z}\ind{s}(0,\tilde{w}),
\end{align}
and note that ${y\ind{s}(\delta,\tilde{w})\leq x((s-1)T\sys+\delta,w)\leq z\ind{s}(\delta,\tilde{w})}$, which, via \eqref{eq:yzConstrCt}, implies that ${\un{x}_s\leq x((s-1)T\sys+\delta,w)\leq\bar{x}_s}$. The same line of arguments holds for all subsequent subintervals. Consequently, in the limit case ${\delta\rightarrow 0}$, we have that ${\un{x}_s\leq x(t,w) \leq\bar{x}_s}$ ${\forall t\in\intTsys}$, ${s\in\set{N}\sys}$, ${w\in\set{W}^{N\ctrl}}$.
\end{proof}

Note that $\dot{y}\ind{s}(\tau,\tilde{w}),\,\dot{z}\ind{s}(\tau,\tilde{w})$ are affine functions of $\tau$ that can be integrated explicitly. In addition, the $\min\{\cdot\}$ and the $\max\{\cdot\}$ in the definitions \eqref{eq:defAuxDyn1}--\eqref{eq:defAuxDyn2} ensure that the minimum value of $y\ind{s}(\tau,\tilde{w})$ and the maximum value of $z\ind{s}(\tau,\tilde{w})$ over ${\tau\in [0,T\sys]}$ are reached at the edges of the interval, \ie, at ${\tau\in\{0,T\sys\}}$. Thus, \eqref{eq:yzConstrCt} holds if for all ${s\in\set{N}\sys}$ and ${\tilde{w}\in\set{W}^{N\sys}}$
\begin{equation}
		y\ind{s}(\tau,\tilde{w})\geq\un{x}_s,\, z\ind{s}(\tau,\tilde{w}) \leq \bar{x}_s,\,\tau\in \{0,T\sys\},
\end{equation}	
which is equivalent to requiring that for all ${s\in\set{N}\sys}$,  ${w\in\set{W}^{N\ctrl}}$,  and ${\tilde{w}\in\set{W}^{N\sys}}$
\begin{align}
	\un{x}_s\leq x((s-1)T\sys,w) \leq \bar{x}_s,\label{eq:xsConstr}\\
	y\ind{s}(T\sys,\tilde{w})\geq \un{x}_s,\label{eq:ysConstr}\\
	z\ind{s}(T\sys,\tilde{w})\leq \bar{x}_s,\label{eq:zsConstr}
\end{align}
Thus, \eqref{eq:xsConstr}-\eqref{eq:zsConstr} are sufficient conditions for the state constraints \eqref{eq:stateConstrCt}.
%
The robust counterparts of \eqref{eq:xsConstr} are derived in Appendix \ref{a:stateDynamics}. The robust counterparts of \eqref{eq:ysConstr} and \eqref{eq:zsConstr} can be found similarly.

\begin{figure}[tb]
\setlength{\belowcaptionskip}{-5pt}
\centering
\psset{unit=1}
\begin{pspicture*}(-3,-1.6)(5.0,2.5)
	
	\psline[linestyle=dotted,dotsep=1pt,linecolor=gray]{}(2.5,-1.2)(2.5,0.5)
	
	
	
	\psline[]{->}(-0.2,0)(3,0)
	\psline[]{->}(0,-1.2)(0,2.5)
	\rput[l](3.1,0){$t,\,\tau$}
	
	\psline[linestyle=dashed]{}(0,2)(2.5,0.15) 
	\qdisk(0,2){2pt}
	\qdisk(2.5,0.15){2pt}
	\psline[linestyle=dashed]{}(0,0.5)(2.5,-0.75) 
	\qdisk(0,0.5){2pt}
	\qdisk(2.5,-0.75){2pt}
	
	\pnode(0,1.5){A} 
	\pnode(2.5,-0.25){B}
	\qdisk(0,1.5){2pt}
	\qdisk(2.5,-0.25){2pt}
	\nccurve[angleA=-65,angleB=170]{A}{B}
	
	\rput[B](0.0,-1.5){${(s-1)T\sys}$}
	\rput[B](2.5,-1.5){${sT\sys}$}
	
%
%
	\rput[r](-0.1,2){$\dot{z}\ind{s}(0,\tilde{w})$}
	\rput[lB](2.6,0.25){${\dot{z}\ind{s}(T\sys,\tilde{w})}$}
	
	\rput[r](-0.1,0.5){$\dot{y}\ind{s}(0,\tilde{w})$}
	\rput[lB](2.6,-0.95){$\dot{y}\ind{s}(T\sys,\tilde{w})$}
	
	\rput[r](-0.1,1.5){${\dot{x}((s-1)T\sys,\tilde{w})}$}
	\rput[lB](2.6,-0.45){$\dot{x}(s T\sys,\tilde{w})$}
	
	\psline[]{->}(-0.5,1)(0.2,1.1)\rput[r](-0.5,1){$\dot{x}(t,\tilde{w})$}
	\psline[]{<-}(1.2,1.1)(1.8,1.6)\rput[l](1.8,1.6){$\dot{z}\ind{s}(\tau,\tilde{w})$}
	\psline[]{->}(0.85,-0.5)(1.6,-0.3)\rput[r](0.85,-0.5){$\dot{y}\ind{s}(\tau,\tilde{w})$}
	
\end{pspicture*}
\caption{Illustration of ${\dot{y}\ind{s}(\tau,\tilde{w})\leq\dot{x}(t)\leq\dot{z}\ind{s}(\tau,\tilde{w})}$ for ${t\in\intTsys}$, $\tau\in [0,T\sys]$.}
\label{fig:stateApprox}
\end{figure}

\subsection{Description of flexibility}
The power, ramp-rate, and state constraints \eqref{eq:powerConstrCt}, \eqref{eq:rampConstrCt}, and \eqref{eq:stateConstrCt}, respectively, are satisfied if the offered SFR capacity $\gamma$ together with the energy trading policies ${(Q\da,q\da)}$, ${(Q\id,q\id)}$ meet the constraints \eqref{eq:powerConstrRobust}, \eqref{eq:rampConstrRobust}, and the robust counterparts of \eqref{eq:xsConstr}-\eqref{eq:zsConstr}. All these constraints are linear in the decision variables ${\zeta:=\{Q\da,q\da,Q\id,q\id,\gamma\}}$ and thus define a convex polytope ${\set{P}(\Phi)}$ parameterized by the system constraints parameters ${\Phi:=\{\un{p},\bar{p},\un{r},\bar{r},\un{x},\bar{x},x_0^{\min},x_0^{\max}\}}$. This polytope can serve as a concise and convenient description of the flexibility the system offers with regard to trading energy in the day-ahead and intra-day markets and offering regulation power on the SFR market.

\section{Solving the Bidding Problem}
\label{s:scheduling}

\subsection{Formulation of the bidding problem}

The energy and reserve bidding problem consists of finding the energy trading policies \eqref{eq:defControllers} and the reserve capacity $\gamma$ that maximize the expected profit while satisfying all system constraints. 
%
%
The profit
\begin{align}
J(\zeta,w) :=&\ R\rsv(\gamma)+R\reg(\gamma,w)-C\da(Q\da,q\da,w)\nonumber\\
&\ -C\id(Q\id,q\id,w),
\label{eq:defObjective}
\end{align}
is uncertain because of its dependency on the SFR activation signal and the market clearing prices. Thus, we maximize the expectation of \eqref{eq:defObjective} over the uncertain parameters ${\Psi:=\{w,c\da,c\id,c\rsv,c^{\ts{up}},c^{\ts{dn}}\}}$, and write the energy and reserve bidding problem as the linear program

\begin{equation}
	\max\limits_{\zeta} \mathbf{E}_{\Psi}[ J(\zeta,w)]\ \ts{ s.t. } \zeta\in\set{P}(\Phi).
	\label{eq:schedulingOpti}
\end{equation}
The number of decision variables and constraints of \eqref{eq:schedulingOpti} is determined by various factors, such as the planning horizon, the different market timescales, the system discretization time, and also the trading policies \eqref{eq:defControllers} used.

\subsection{Computational study}
\label{ss:simulationStudy}
The results of the economic bidding problem \eqref{eq:schedulingOpti} do not only depend on the market characteristics and technical constraints of the system, but are also influenced by market prices. An economic analysis of \eqref{eq:schedulingOpti} can be found in \cite{Qureshi2016}. To investigate the maximum available SFR capacity independent of economic considerations, we solve \eqref{eq:schedulingOpti} with objective ${\max_{\zeta}\gamma}$ and investigate the effects that different market time scales, lead times, and trading policies have on the maximum available SFR capacity $\gamma\opt$.  
To facilitate the interpretation of the results, we consider an ideal battery modeled according to \eqref{eq:stateDynamics} with parameters ${a=0}$, ${b=c=1}$, and no exogenous input, \ie, ${u=0}$. The physical constraints of the battery are similar to those of a \emph{Tesla Powerwall 2} battery \cite{Powerwall2016}: ${-\un{p}_s=\bar{p}_s=5}$ kW, ${\un{x}_s=0}$ kWh, ${\bar{x}_s=15}$ kWh, ${s=1,\dots,N\sys}$, and $x_0=7.5$ kWh. We use ${T\sys=5}$ min, ${T\ctrl=1}$ s, ${N\da_{\ts{lb}}=13}$ h, and ${\hor=T\rsv}$.
Table \ref{tab:capResults} provides the market settings and trading policy parameters considered and the results obtained. 
%
The maximum available SFR capacity $\gamma\opt$ and the minimum ramp rates $\bar{r}\opt$ required to offer this service are provided as quantities relative to the rated power $\bar{p}$.

As a reference case, the first setting described in \tab\ref{tab:capResults} considers the case where the power reference cannot be adjusted via day-ahead or intra-day energy trades; it remains fixed regardless of potential SFR activation. This setting results in the least amount of SFR capacity, which, for the simple system at hand, can be computed analytically by dividing the available up- and down energy buffer capacity by the duration of the planning horizon: ${\gamma\opt=}$ 7.5 kWh / (7$\cdot$24) h $\approx$ 0.0446 kW, which corresponds to 0.89\% of $\bar{p}$. 
%
More SFR capacity can be made available by adjusting the power reference via day-ahead energy trades depending on past activation. The available capacity $\gamma\opt$ increases as trading decisions are based on more data on past activation, see settings 2--6, where the value of $N\da_{\ts{lb}}$ is gradually increased. Considering more than 24 h of past activation data, \ie, $N\da_{\ts{lb}}>24$, does not yield more SFR capacity because the activation that occurred more than 24 h in the past has already been compensated for in the previous day-ahead market. 

Whereas day-ahead adjustments of the power reference yield only minor increases of SFR capacity over the reference setting, intra-day adjustments make more than 50\% of the rated power available for SFR, see settings 7--9. Reducing the intra-day market lead time $T\id_{ld}$ from 1 h to 1/4 h, however, has only minor effects on $\gamma\opt$, see settings 11--13. Reducing the planning horizon from 7 d to 1 d results in an increase of $\gamma\opt$ from 50.26\% to 51.87\%, see settings 7 and 11. 
The results also illustrate the ramp-rate constraints \eqref{eq:rampConstr}. Consider setting 13, for instance: Tracking SFR activation of size ${\gamma\opt=52.63}$ \% of $\bar{p}$ around a constant $p\rf$ would require ${\bar{r}=2\cdot 52.63/T\ctrl=105.26}$ (\% of $\bar{p}$)/s, whereas in reality the slightly larger amount ${\bar{r}=105.42}$ (\% of $\bar{p}$)/s is needed.

The results highlight the importance of the intra-day market which operates at faster time scales than the day-ahead market. It allows energy-constrained systems such as batteries to compensate for the regulation energy provided. The introduction of energy markets operating at even shorter time intervals, \eg 5 min, could further increase the amount of available reserves. Additional reserves can be made available by reducing the length of the tendering period $T\rsv$ and by relaxing the requirement that a constant amount of SFR must be offered.

The bidding problem \eqref{eq:schedulingOpti} has been solved with \textsc{IBM ILOG}\textregistered\footnote{\label{fnote}ILOG and CPLEX are trademarks of International Business Machines Corp., registered in many jurisdictions worldwide. Other product and service names might be trademarks of IBM or other companies.} \textsc{CPLEX}\textregistered\footnotemark[3]\, v12.6 on a system featuring an Intel\textregistered\, Xeon\textregistered\, 12-core CPU @2.4 GHz and 20 GB of RAM. The solving times are provided in the rightmost column of \tab\ref{tab:capResults}. For fixed discretization time $T\sys$, the problem size grows quadratically with the planning horizon length $\hor$. An efficient means to reduce the problem size is to reduce the number of free variables in the energy trading policies \eqref{eq:defControllers} via the parameters $N_{lb}\da$ and $N_{lb}\id$, \cf Section \ref{ss:adpatingSchedules}. 

\begin{table}[t]
\setlength{\abovecaptionskip}{-5pt}
\caption{Maximum SFR capacity and minimum required ramp rates for different market  settings and trading policies.}
\label{tab:capResults}
\begin{center}
\begin{tabular}{|c|c|c|c|c|c|c|c|c|c|} \hline 
\# & ${T\rsv}$ &  $T\id_{ld}$ & $N_{lb}\da$ & $N_{lb}\id$ & $\gamma\opt$ & $\bar{r}\opt$ & Sol. time\\ 
 & [d] & [h] &  & & [\% $\bar{p}$] & [(\% $\bar{p}$)/s] & \\\hline
1 & 7 & 1 & 0  & 0 & 0.89 & 1.78 & 8 min\\ 
2 & 7 & 1 & 1  & 0 & 0.93 & 1.87 & 9 min\\ 
3 & 7 & 1 & 2  & 0 & 0.96 & 1.94 & 12 min\\ 
4 & 7 & 1 & 6 & 0 & 1.14 & 2.29 & 40 min\\ 
5 & 7 & 1 & 12 & 0 & 1.55 & 3.13 & 100 min\\
6 & 7 & 1 & $\geq$ 24 & 0 & 4.05 & 8.25 & 270 min\\ \hline
7 & 7 & 1 & 0 & $\geq$ 1 & 50.26 & 100.69 & 30 min\\ 
8 & 7 & 1/2 & 0 & $\geq$ 1 & 50.34 & 100.84 & 30 min\\ 
9 & 7 & 1/4 & 0 & $\geq$ 1 & 50.37 & 100.91 & 30 min\\ \hline
10 & 1 & 1 & 0 & 0 & 6.25 & 12.50 & 5 s\\ 
11 & 1 & 1 & 0 & $\geq$ 1 & 51.87 & 103.90 & 25 s\\ 
12 & 1 & 1/2 & 0 & $\geq$ 1 & 52.38 & 104.92 & 25 s\\ 
13 & 1 & 1/4 & 0 & $\geq$ 1 & 52.63 & 105.42 & 25 s\\ 
\hline
\end{tabular}
\end{center}
\end{table}

\section{Conclusion}
\label{s:conclusion}
The provision of accurate and reliable SFR services is important to compensate for imbalances in the power grid. We showed how to implement energy schedules in a way that is compliant with ENTSO-E regulations and allows one to explicitly consider limitations of the power-ramp rates of the system providing SFR. Based on piece-wise affine and continuous power trajectories, we derived a reformulation of the energy and reserve bidding problem as a robust linear program that accurately models the different market timescales and lead times. Our approach is versatile and applicable in various market settings. Computations show how adjusting the power reference via day-ahead and intra-day energy markets significantly increases the amount of available SFR capacity. 

%
\IEEEpeerreviewmaketitle

\appendices

\section{Discrete state dynamics}
\label{a:stateDynamics}

Consider the linear time-invariant state dynamics \eqref{eq:stateDynamics}: $${\partial x(t,w)/\partial t = ax(t,w)+bu_s+cp\opt(t,w)},$$ ${t\in\intTsys}$, ${s\in\set{N}\sys}$, with ${x_0:=x(0)\in [x_0^{\min},x_0^{\max}]}$ and $p\opt(t,w)$ according to \eqref{eq:defTargetPower}. Standard integration techniques can be used to express the state ${x_s(w):=x(sT\sys,w)}$ as
\begin{align}
	x_{s}(w) =&\, fx_{s-1}(w)+gu_s+ h_1p\rf_{s-1}(\tilde{w})+h_2p\rf_{s}(\tilde{w}) + \gamma v_s,
	\label{eq:discreteDynamics}
\end{align}
where we have used
\begin{align}
	f &:= e^{aT\sys},\hspace*{21mm} g :=b\int_0^{T\sys}e^{a\tau}d\tau,\\
	h_1 &:=\frac{c}{T\sys}\int_0^{T\sys}e^{a\tau}\tau d\tau,\ \ 
	h_2 :=\frac{c}{T\sys}\int_0^{T\sys}e^{a\tau}(T\sys-\tau) d\tau,\\
	v_s &:=c\int_0^{T\sys}e^{a\tau} w(sT\sys-\tau)d\tau.
\end{align}
Note that the power reference is determined by the averaged activation $\tilde{w}$, whereas the term $v_s$ depends on the unfiltered activation signal $w(t)$. Because $\tilde{w}$ and $v$ are computed from $w$, we write $x(w)$ only.
The discrete-time state evolution can be written in vector form as
\begin{equation}
	x(w) = Fx_0+Gu+Hp\rf(\tilde{w})+\gamma K v,
	\label{eq:discreteDynLD}
\end{equation} 
with ${F\in\R{N\sys}}$, ${G,K\in\R{N\sys\times N\sys}}$, and ${H\in\R{N\sys\times (N\sys+1)}}$.

By plugging \eqref{eq:defPowerRef} into \eqref{eq:discreteDynLD}, we have
\begin{equation}
	x_s(w) = \sum\limits_{i=1}^{s}\{ Q_{s,i}\tilde{w}_i + \gamma K_{s,i} v_i\} + \kappa_s,
\end{equation}
where we have used 
\begin{align}
	Q &:= HR(MQ\da A\da + Q\id A\id)\\
	\kappa &:= Fx_0+Gu+HR(Mq\da +q\id)
\end{align}
for brevity. Using the definitions of $\tilde{w}$ and $v$, we can write
\begin{equation}
	x_s(w) = \sum\limits_{i=1}^{s}\left\{\int_0^{T\sys}\left(\frac{Q_{s,i}}{T\sys}  + c\gamma e^{a\tau} K_{s,i} \right)w(\tau)d\tau\right\} + \kappa_s.
	\label{eq:xsSum}
\end{equation}
Note that in case of self-dissipation, \ie, ${a<0}$, the integrand is time-dependent, making it difficult to evaluate the integral analytically. To circumvent this, we substitute $e^{a\tau}$ by the constant $e^{a\hat{\tau}}$ which is chosen such that for all ${\tau\in [0,T\sys]}$
\begin{equation}
	\abs{e^{a\tau}-e^{a\hat{\tau}}}\leq \epsilon:=(1-e^{a T\sys})/2.
\end{equation}
Because 
\begin{multline}
	\max\limits_{w\in\mathcal{W}}\int_0^{T\sys}c\gamma K_{s,i}(e^{a\tau}-e^{a\hat{\tau}})w(\tau)d\tau\\
	= \int_0^{T\sys}c\gamma K_{s,i}\abs{e^{a\tau}-e^{a\hat{\tau}}} d\tau\leq c\gamma\epsilon T\sys K_{s,i},
\end{multline}
the integral in \eqref{eq:xsSum} is bounded from above:
\begin{align}
& \int_0^{T\sys}\left(\frac{Q_{s,i}}{T\sys}  + c\gamma e^{a\tau} K_{s,i} \right)w(\tau)d\tau \nonumber\\
 \leq& \int_0^{T\sys}\left(\frac{Q_{s,i}}{T\sys}  + c\gamma e^{a\hat{\tau}} K_{s,i} \right)w(\tau)d\tau + c\gamma\epsilon T\sys K_{s,i}\nonumber\\
  \leq&\, \abs{Q_{s,i} + c\gamma e^{a\hat{\tau}}T\sys K_{s,i}} + c\gamma\epsilon T\sys K_{s,i}\ \ \forall w\in\set{W}^{N\ctrl}.
  \label{eq:aux3}
\end{align}
Bounds on $x_s(w)$ can be found by substituting the summands in \eqref{eq:xsSum} with \eqref{eq:aux3}: For all ${w\in\set{W}^{N\ctrl}}$ and ${s\in\set{N}\sys}$
\begin{align}
\begin{split}
	x_s(w) &\leq \abs{Q_{s} + c\gamma e^{a\hat{\tau}}T\sys K_{s}}\mathbf{1} + c\gamma\epsilon T\sys K_{s}\mathbf{1} + \kappa_s, \\ 
	x_s(w) &\geq -\abs{Q_{s} + c\gamma e^{a\hat{\tau}}T\sys K_{s}}\mathbf{1} - c\gamma\epsilon T\sys K_{s}\mathbf{1} + \kappa_s. \label{eq:defXsMaxMin}
\end{split}
\end{align}
%
%
Inequalities \eqref{eq:defXsMaxMin} are the robust counterparts of \eqref{eq:xsConstr}.



\ifCLASSOPTIONcaptionsoff
  \newpage
\fi

\bibliographystyle{ieeetr}
\bibliography{mylibrary}



%
%
%

\vspace*{-3\baselineskip}
\vfill
\begin{IEEEbiographynophoto}
{Fabian L. M\"uller} received the B.Sc. degree in mechanical engineering and the M.Sc. degree in robotics, systems and control from the Swiss Federal Institute of Technology (ETH) Zurich, Switzerland, in 2009 and 2012, respectively. 

He was recently affiliated with the Institute for Dynamic Systems and Control at ETH Zurich. 
 
Mr. M\"uller is currently a doctoral student with the Automatic Control Laboratory at ETH Zurich and with IBM Research Zurich. His research interests include technologies for smart power grids, control theory, and optimization.
\end{IEEEbiographynophoto}
\vspace*{-2.55\baselineskip}
\begin{IEEEbiographynophoto}
{Stefan Woerner} received the M.Sc degree in applied mathematics and the doctoral degree from the Department of Management, Technology and Economics, Swiss Federal Institute of Technology (ETH) Zurich, Switzerland, in 2010 and 2013, respectively.

He is currently a research staff member at IBM Research in Zurich where he leads multiple projects on supply chain management and forecasting.

Dr. Woerner's main research interests are in approximate dynamic programming, robust optimization, simulation optimization, and applications of these techniques to problems in supply chain management and energy systems.
\end{IEEEbiographynophoto}
\vspace*{-2.55\baselineskip}
\begin{IEEEbiographynophoto}
{John Lygeros} (F'11) received the B.Eng. degree in electrical engineering and the M.Sc. degree in systems control from the Imperial College of Science Technology and Medicine, London, U.K., in 1990 and 1991, respectively, and the Ph.D. degree from the Electrical Engineering and Computer Sciences Department, University of California at Berkeley (UC Berkeley), Berkeley, CA, USA, in 1996.

After Post-Doctoral Researcher appointments at MIT and UC Berkeley in 2000, he joined the Department of Engineering, University of Cambridge, Cambridge, U.K., as a Lecturer and Churchill College, Cambridge, as a fellow. From 2003 to 2006, he was an Assistant Professor with the Department of Electrical and Computer Engineering, University of Patras, Patras, Greece. In 2006, he joined the Automatic Control Laboratory, ETH Zurich, Zürich, Switzerland, as an Associate Professor, where he has been a Full Professor since 2010. Since 2009, he has been serving as the Head of the Automatic Control Laboratory, and since 2015, he has been serving as the Head of the Department of Information Technology and Electrical Engineering. His current research interests include modeling, analysis, and control of hierarchical, hybrid, and stochastic systems, with applications to biochemical networks, automated highway systems, air traffic management, power grids, and camera networks. 

Dr. Lygeros is a member of the IET and the Technical Chamber of Greece. Since 2013, he has been serving as the Treasurer of the International Federation of Automatic Control.
\end{IEEEbiographynophoto}

%









\end{document}